	\newtheorem{prop}{Proposition}
		\newtheorem{rem}{Remark}
\newcommand*\diff{\mathop{}\!\kern0pt\mathrm{d}}
\address{}
\abstract{The classic Brennan-Schwartz algorithm to solve the linear complementary problem, which arises from the finite difference discretization of the partial differential equation related to American option pricing does not lead to the exact solution under negative interest rates. This is due to the two exercise boundaries which may appear under negative interest rate, while the algorithm was proven to lead to the exact solution in the case of a single exercise boundary only. This paper explains that two sweeps of the Brennan-Schwartz algorithm in two directions is enough to recover the exact solution. }
\begin{document}
	
\section{Introduction}
American options allow the holder of the contract to exercise their right to buy (for a call option) or sell (for a put option) the underlying asset $S$ at a fixed strike price $K$, at any time prior to the maturity date $T$ of the option contract. In contrast to the European option, where exercise is only possible at the maturity date, the early-exercise feature introduces a non-linearity in the valuation of American options and numerical techniques must be used.

A common technique to price American option contracts is to discretize the partial differential equation (PDE) of the chosen model, such as Black-Scholes \citep{black1973pricing}, local volatility \citep{dupire1994pricing}, or stochastic volatility \citep{heston93}, with the finite difference method \citep{WiDeHo93, OSullivan09, ikonen2007pricing,  lefloch2014tr, le2021pricing}. Then, a linear complementary problem (LCP) must be solved at each time-step. 

In the context of implicit finite difference schemes, there are many ways to solve the LCP: the Brennan-Schwartz algorithm \citep{brennan1977valuation}, front-tracking \citep{pantazopoulos1998front}, the penalty method \citep{nielsen2002penalty}, operator splitting \citep{ikonen2004operator}, the projected SOR \citep{WiDeHo93}, and more recently, the policy iteration of \citet{reisinger2012use}. 
The simplest way is to solve the system without considering the free boundary and then to apply the early exercise condition explicitly through \texttt{currentPrice  = max(payoff, currentPrice)}. While this keeps the second order accuracy on explicit schemes, it is only first order accurate in time on implicit schemes \citep{OSullivan09}. The Brennan-Schwartz algorithm, for its performance and simplicity, is perhaps the most popular algorithm to solve the discrete LCP exactly, in the most common case of a tridiagonal system. But it suffers from known shortcomings \citep{jaillet1990variational}, it does not work for slightly more exotic American contracts, typically with a non-monotonic payoff such as $F(x)= {\lvert x-K\lvert} $, and may also break under negative interest rates for a regular vanilla American option. The policy iteration algorithm resolves those shortcomings. 

In this paper, we propose an alternative, non-iterative algorithm which still works under negative interest rates, as well as for non-monotonic payoffs. The main idea is to apply the Brennan-Schwartz algorithm in two sweeps: one downward sweep (the classic sweep for an American put option) and one upward sweep (the classic sweep for an American call option). Instead of using the original Brennan-Schwartz algorithm, we prefer the LU decomposition formulation of \citet{ikonen2007pricing}, as the LU decomposition stage does not necessarily need to be done every single time, and may thus lead to some interesting performance improvements.

\section{The LCP under the Black-Scholes model}
\subsection{PDE formulation}

Let $\mathcal{L}$ be the Black-Scholes-Merton operator defined by:
\begin{equation}
	\mathcal{L}\left(f(x,t),x,t\right) =  - \frac{1}{2}\sigma(x,t)^2 x^2\frac{\partial^2 f }{\partial x^2}  - \mu(x,t) x \frac{\partial f}{\partial x}+ r(x,t)f(x,t)\,,
\end{equation}
where $x$ is the underlying price, $\mu$ is the underlying drift, $\sigma$ its volatility and $r$ the interest rate, $F(x)=f(x,T)$ the option payoff at maturity and $f(x,t)$ is the option price at time $t$ for an underlying asset spot price of $x$. 

With this notation, the Black-Scholes-Merton equation is
\begin{equation}
	\frac{\partial f}{\partial t}(x,t) = \mathcal{L}\left(f(x,t),x,t\right) \,.
\end{equation}

The early exercise feature of the option adds a free boundary on top of the Black-Scholes-Merton partial differential equation. Let $f$ be the option price, the following system of partial differential inequalities is satified \citep{LaLa96}:

\begin{equation}\label{lcp}
	\left. \begin{gathered}
		\frac{\partial f}{\partial t}(x,t)  \leq 	\mathcal{L}\left(f(x,t),x,t\right)\,,\\
		\left(\frac{\partial f}{\partial t}(x,t) - 	\mathcal{L}\left(f(x,t),x,t\right)\right)\cdot\left(f-F\right)=0\,,\\
		f \geq F\,,
	\end{gathered} \right\}
	\qquad \text{}
\end{equation}
where $(x,t) \in [0,X]\times[0,T]$, with boundary conditions 	
\begin{align}
	f(x,T) &= F(x)\,,\\
	\frac{\partial^2 f}{\partial x^2}(0,t) &= 0\,, \label{eqn:bc-low}\\
	\frac{\partial^2 f}{\partial x^2}(X,t) &= 0\,.\label{eqn:bc-high}
\end{align}

For a vanilla American call, we have $F(x) = \max(x-K,0)$ and for a put, we have $F(x) = \max(K-x,0)$ where $K$ is the strike price.

\subsection{TR-BDF2 Discretization}
For a time discretization defined by $(t_j)_{j \in \{0,..,n\}}  ~~,~~ k_j = t_j-t_{j-1}$ where $t_0=0$ is typically the valuation time and $t_n=T$ the option expiry, the discretization of the Black-Scholes-Merton PDE by the TR-BDF2 scheme reads \citep{lefloch2014tr}
\begin{subequations}
	\begin{align}
	f^\star = f^n + \frac{\alpha k_n}{2}\left(\mathcal{L}(f^n)+\mathcal{L}(f^\star)\right)\,,\\
	f^{n-1} = \frac{1}{2-\alpha}\left(\frac{1}{\alpha} f^\star - \frac{(1-\alpha)^2}{\alpha}f^n + (1-\alpha)k_n \mathcal{L}(f^{n-1})\right)\,,
\end{align}
\end{subequations}
with $\alpha = 2 - \sqrt{2}$ and $f^j(x) = f(x,t_j)$.

A second-order central discretization in space on the full domain, and first-order for the boundary conditions \ref{eqn:bc-low}  and \ref{eqn:bc-high} leads to the following two implicit stages for $j=n,...,1$
\begin{subequations}
	\begin{align}
		M^{j} \bm{f}^{\star} &= \bm{g}^{j}\,,\\
		M^j \bm{f}^{j-1} &= \bm{h}^j\,,
	\end{align}
\end{subequations}
where, at the time-step $j$, $M^j$ is a tridiagonal matrix with lower diagonal $a_{i}^{j}$ for $i \in \{1,...,m\}$, upper diagonal $c_{i}^{j}$ for $i \in \{0,...,m-1\}$ and diagonal $b_{i}^{j}$ for $i \in \{0,...,m\}$ and
\begin{align*}
	a_{i}^{j} & = \frac{\alpha k_{j}}{2\Delta x_{i-1}\left(\Delta x_{i-1}+\Delta x_i\right)}\left(\mu_{j}x_i\Delta x_i  - \sigma_{i,j}^2 x_i^2\right)\,,\\
	b_{i}^{j} & = 1 + \frac{\alpha k_{j}}{2}\left(r_{j}  +\frac{\mu_{j} (\Delta x_{i-1}-\Delta x_i) x_i + \sigma_{i,j}^2 x_i^2}{\Delta x_i \Delta x_{i-1}}\right)\,,\\
	c_{i}^{j} & = -\frac{\alpha k_{j}}{2\Delta x_i\left(\Delta x_{i-1}+\Delta x_i\right)}\left(\mu_{j} x_i \Delta x_{i-1} + \sigma_{i,j}^2 x_i^2\right)\,,\\
	g_{i}^{j}&=-a_{i}^{j} f_{i-1}^{j}+ (2-b_{i}^{j}) f_{i}^{j} - c_{i}^{j} f_{i+1}^{j}\,,\\
	h_{i}^{j}&=\frac{1}{2-\alpha}\left(\frac{1}{\alpha} f_{i}^{\star} - \frac{(1-\alpha)^2}{\alpha} f_{i}^{j}\right)\,,
\end{align*}
for $i=1,...,m-1$ with $\bm{f}^j = (f_{0}^{j},...,f_{m}^{j})^\top$, $\bm{g}^j = (g_{0}^{j},...,g_{m}^{j})^\top$, $\bm{h}^j = (h_{0}^{j},...,h_{m}^{j})^\top$, $\Delta x_i = x_{i+1}-x_i$. The boundary conditions lead to \begin{align*}
	b_{0}^{j} =   1 + \frac{\alpha k_{j}}{2}\left(r_{j}+\frac{\mu_{j} x_0}{\Delta x_0}\right)\,,&\quad	c_{0}^{j}  = - \alpha k_{j} \frac{\mu_{j} x_0}{2 \Delta x_0}\,,\\
	a_{m}^{j} =  \alpha k_{j}\frac{\mu_{j} x_m}{2 \Delta x_{m-1}}\,,&\quad	b_{m}^{j} =  1 + \frac{\alpha k_{j}}{2}\left(r_{j}-\frac{\mu_{j}x_m}{\Delta x_{m-1}}\right)\,.\\
\end{align*}

The  corresponding linear complimentary problem (\ref{lcp}) discretization reads, for $j=n,...,1$
\begin{subequations}\begin{align}\label{lcp_discrete_tr}
	\left.\begin{aligned}
		M^{j} \bm{f}^{\star} &\geq \bm{g}^{j} \\
		\bm{f}^{\star} &\geq F(\bm{x})\\
		\left(M^j \bm{f}^{\star}-\bm{g}^{j}\right)^\top \left(\bm{f}^{\star}-F(\bm{x})\right) &= 0
	\end{aligned}
	\right\}
	&\qquad \text{Trapezoidal stage,}\\
\label{lcp_discrete_bdf2}
	\left.\begin{aligned}
		M^j \bm{f}^{j-1} &\geq \bm{h}^j\\
		\bm{f}^{j-1} &\geq F(\bm{x})\\
		\left(M^j \bm{f}^{j-1}-\bm{h}^j\right)^\top \left(\bm{f}^{j-1}-F(\bm{x})\right) &= 0
	\end{aligned}
	\right\}
	&\qquad \text{BDF2 stage.}
\end{align}
\end{subequations}

The Brennan-Schwartz and the policy iteration algorithms are only valid if the matrix $M^j$ has the following properties \citep{jaillet1990variational}: 
\begin{itemize}
	\item the lower and upper diagonals are negative: $a_{i,j} \leq 0$ and $c_{i,j} \leq 0$ for $i \in \{1,...,m-1\}$, $c_{0,j}\leq0$, $a_{m,j} \leq 0$
	\item the diagonal is dominant: $a_{i,j} + b_{i,j} + c_{i,j} \geq 0$ for  $i \in \{1,...,m-1\}$, $b_{0,j}+c_{0,j} \geq 0$, $a_{m,j}+b_{m,j} \geq 0$ and $b_{i,j}>0$ for $i \in \{0,...,m\}$.
\end{itemize}
In other terms, $M^j$ must be an irreducible Minkowski matrix (also known as M matrix). For our TR-BDF2 discretization, this translates to for $i \in \{1,...,m-1\}$:
\begin{subequations}
\begin{align}\label{eq_mu_bs}
	-\frac{\sigma_{i,j}^2 x_i}{\Delta x_{i-1}} &\leq \mu_{j} \leq \frac{\sigma_{i,j}^2 x_i}{\Delta x_i}\,,\\
\label{eq_r_bs}
	0 &\leq 1+ \frac{\alpha k_j}{2} r_{i,j}\,.
	\end{align}
\end{subequations}
And for the boundaries: 
\begin{align}\label{eq_boundary_bs}
	\mu_{j}x_0 \geq 0 \,,&\quad	\mu_{j}x_m \leq 0\,.
\end{align}
Except for the boundaries, those conditions are almost always verified in practice.
Furthermore one can always make $\Delta x_i$ small enough so that \ref{eq_mu_bs} holds. We may also impose this condition in the general case via exponential fitting \citep{il1969differencing, healy2021}.  If we choose $x_0 = 0$, which also helps in improving the overall accuracy for American options, then only the upper boundary may be problematic.

\section{Double sweep LU decomposition}
The LU decomposition algorithm is slightly easier to analyze on the following reformulated equivalent problem:
\begin{equation}\label{lcp_eo}
	\left.\begin{aligned}
		M^{j} \bm{z} &\geq \bm{v} \\
		\bm{z} &\geq 0\\
		\left(M^j \bm{z}-\bm{v}\right)^\top \bm{z} &= 0
	\end{aligned}
	\right\}
\end{equation}
with 
\begin{align}
	\bm{z} = \bm{f}^\star - F(\bm{x})\,,&\quad \bm{v} = \bm{g}^j - M^j  F(\bm{x})
\end{align}
for the TR-BDF2 stage and
\begin{align}
	\bm{z} = \bm{f}^{j-1} - F(\bm{x})\,,&\quad \bm{v} = \bm{h}^j - M^j  F(\bm{x})
\end{align}
for the BDF2 stage.

The algorithm presented in \cite{ikonen2007pricing}, valid for an American call payoff, will decompose $M^j$ such that $M^j = LU$ with $L$ lower triangular, $U$ upper triangular and solve first $L \bm{y} = \bm{v}$, then $U \bm{z} = \bm{y}$. Similarly, the algorithm of an American put payoff will decompose $M^j$ such that $M^j = \bar{U}\bar{L}$ with $\bar{L}$ lower triangular, $\bar{U}$ upper triangular and solve first $\bar{U} \bm{y} = \bm{v}$, then $\bar{L} \bm{z} = \bm{y}$. It is during the last step that the Brennan-Schwartz algorithm differs from the classic algorithm for a linear tridiagonal system, by enforcing the non-linear constraint.
\begin{algorithm}
	$l_{00} = b_{0}$  \tcp*[f]{start $LU$ decomposition.}\\
	$u_{01} = c_{0}/l_{00}$\\	
	\For{$i \leftarrow $ 1 \KwTo $m-1$}{
			$l_{ii-1} = a_{i}$\\
			$l_{ii} = b_{i} - l_{ii-1}u_{i-1i}$\\
			$u_{ii+1} = c_{i}/l_{ii}$
%
	}
	$l_{mm-1} = a_{m}$\\
	$l_{mm} = b_{m} - l_{mm-1}u_{m-1m}$\\
	$\bar{u}_{mm} = b_{m}$ \tcp*[f]{start $\bar{U}\bar{L}$ decomposition.}\\
	$\bar{l}_{mm-1} = a_{m}/\bar{u}_{mm}$\\	
	\For{$i \leftarrow m-1$ 1 \KwTo $1$}{
		$\bar{u}_{ii+1} = c_{i}$\\
		$\bar{u}_{ii} = b_{i} - \bar{u}_{ii+1}\bar{l}_{i+1i}$\\
		$\bar{l}_{ii-1} = a_{i}/\bar{u}_{ii}$
			%
	}
	$\bar{u}_{01} = c_{0}$\\
	$\bar{u}_{00} = b_{0} - \bar{u}_{01}\bar{l}_{10}$\\
	\caption{LUUL Decomposition for the transformed problem}
\end{algorithm}

\begin{algorithm}
	$y_0 = v_0/l_{00}$     \tcp*[f]{start $LU$ back-solve.}\\
	\For{$i \leftarrow $ 1 \KwTo $m$}{
		$y_i = (v_i - l_{ii-1}y_{i-1})/l_{ii}$
	}

	$z_m = y_m$\\
	$z_m = \max(z_m, 0)$\\
	\For{$i \leftarrow  m-1$ \KwTo $0$}{
		$z_i = y_i - u_{ii+1}z_{i+1}$\\
			$z_i = \max(z_i, 0)$
	}
	$y_m = v_{m}/\bar{u}_{mm}$     \tcp*[f]{start  $\bar{U}\bar{L}$ back-solve.}\\
	\For{$i \leftarrow m-1$  \KwTo 0}{
		$y_i = (v_i - \bar{u}_{ii+1}y_{i+1})/\bar{u}_{ii}$
	}
	
	$\bar{z}_0 = y_0$\\
	$z_0 = \max(z_0, \bar{z}_0)$\\
	\For{$i \leftarrow  1$ \KwTo $m$}{
		$\bar{z}_i = y_i - \bar{l}_{ii-1}z_{i-1}$\\
		$z_i = \max(z_i, \bar{z}_i)$
	}
	\caption{Brennan and Schwartz algorithm with LUUL Decomposition for the transformed problem\label{alg:BS_LUUL}}
\end{algorithm}

The correctness of the original Brennan-Schwartz algorithm is proven in \citep[Propostion 5.6]{jaillet1990variational} when $M^j$ is an M-matrix using specific assumptions on the shape of the solution. For an American put option, on the transformed problem formulation, the assumption reads \begin{equation}
\exists k \in \{0,...,m\} \mid \forall i \leq k, z_i = 0 \textmd{ and } \forall i > k, z_i > 0\,.
\end{equation}
Under positive interest rates, a single early-exercise boundary exists for an American call or put option, and this justifies the validity of the $\bar{U}\bar{L}$ back-solving in Algorithm \ref{alg:BS_LUUL}. In particular, the original Brennan-Schwartz algorithm is not valid in the context of negative interest rates, as for a vanilla American call or put option, two early-exercise boundaries appear when respectively $r < r-\mu < 0$ or  $r-\mu < r < 0$ for a constant interest rate $r$ and drift $\mu$ \citep{andersen2021fast}.

\citet{andersen2021fast} show that the two boundaries may be solved independently. It can also easily be seen from the PDE formulation as a free-boundary problem \citep{healy2021}:
\begin{equation}
	\mathcal{L} f(x,t) = \frac{\partial f}{\partial t}(x,t)\quad \textmd{ for } (x,t) \in 
	\mathcal{C}\,,\label{eqn:am_pde}
\end{equation}
with initial condition 
\begin{align}
	\lim\limits_{t \to T} f(x, t) &= F(x)\,, \label{eqn:am_payoff}
\end{align}
and boundary conditions
\begin{align}
	f(x,t) &= F(x) \quad \textmd{ for } x=u(t) \textmd{ or } x = l(t)\,,\label{eqn:am_ulb}\\
	\frac{\partial f}{\partial x} &= -1\quad \textmd{ for } x=u(t)  \textmd{ or } x=l(t)\,, \label{eqn:am_ulb_high}\\
	f(x,t) &> F(x) \quad \textmd{ in } \mathcal{C}\,,\\
	f(x,t) &= F(x) \quad \textmd{ in } \mathcal{D}\,,\label{eqn:am_d}
\end{align}
where $l(t)$ and $u(t)$ represent respectively the lower and upper early-exercise boundaries and
\begin{align}
	\mathcal{C} &= \left\{ (x,t) \in [0,\infty)\times[0,T] : l(t) < x < u(t) \right\} \,,\\
	\mathcal{D} &= \left\{ (x,t) \in [0,\infty)\times[0,T] : x < l(t) \right\} \cup \left\{ (x,t) \in [0,\infty)\times[0,T] : x > u(t) \right\} \,.
\end{align}

As long as the boundaries do not yet intersect, we can split the problem in two separate domains $\left\{ (x,t) \in [0,\infty)\times[0,T] : x \leq l(t) \right\}$ and $\left\{ (x,t) \in [0,\infty)\times[0,T] : x \geq u(t) \right\}$. When the boundaries intersect at time $t^\dag$, early-exercise is never optimal for $t < t^\dag$, and the algorithm is applicable on $(t^\dag,T]$.

This split motivates the two back-solving sweeps of Algorithm \ref{alg:BS_LUUL}. Now let us prove the validity of the algorithm for the case of two boundaries.
\begin{prop}
	If the solution $\bm{z}$ of System \ref{lcp_eo} satisfies  
	\begin{equation}\exists (k_1,k_2) \in \{0,...,m\}^2 \mid \forall  k_1 \leq i \leq k_2 , z_i = 0 \textmd{ and } \forall i \in \{0,...,m\}\setminus \{k_1,...,k_2\}\,,  z_i > 0\,,\end{equation} and $M^j$ is an M-matrix,
	then Algorithm \ref{alg:BS_LUUL} finds the exact solution.
\end{prop}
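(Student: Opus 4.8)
The plan is to compare the two back-solves of Algorithm \ref{alg:BS_LUUL} against the true LCP solution $\bm z$ of System \ref{lcp_eo} by exploiting the one-directional flow of information that the M-matrix structure forces on each triangular factor. First I would record the sign facts we get for free from $M^j$ being an irreducible M-matrix: both factorizations have strictly positive pivots ($l_{ii}>0$, $\bar u_{ii}>0$), the surviving off-diagonals are non-positive ($u_{i,i+1}=c_i/l_{ii}\le 0$ and $\bar l_{i,i-1}=a_i/\bar u_{ii}\le 0$), and each bidiagonal factor is itself an M-matrix, so $L^{-1}\ge 0$ is lower triangular and $\bar U^{-1}\ge 0$ is upper triangular. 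I would then introduce the residual $\bm w = M^j\bm z-\bm v\ge 0$; complementarity $\bm w^\top\bm z=0$ together with the hypothesis pins the support of $\bm w$ to the exercise block, i.e.\ $w_i=0$ for $i<k_1$ and for $i>k_2$.

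For the $LU$ back-solve I would set $\bm y^{\ast}=U\bm z$, so that $L\bm y^{\ast}=M^j\bm z=\bm v+\bm w$ while the algorithm computes $L\bm y=\bm v$; subtracting gives $\bm y^{\ast}-\bm y=L^{-1}\bm w\ge 0$. Because $L^{-1}$ is lower triangular and $\bm w$ is supported on $\{k_1,\dots,k_2\}$, this difference vanishes for $i<k_1$, so $y_i=y^{\ast}_i$ exactly on the lower continuation region, while $y_i\le y^{\ast}_i=z_i+u_{i,i+1}z_{i+1}$ elsewhere. Using $u_{i,i+1}\le 0$ and the monotonicity of $\max$, a downward induction then yields that the first-sweep output satisfies $z^{(1)}_i\le z_i$ for every $i$. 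On the exercise block this forces $0\le z^{(1)}_i\le z_i=0$, hence $z^{(1)}_i=z_i$; on $\{0,\dots,k_1-1\}$ the identity $y_i=y^{\ast}_i$ makes the clamped recursion reproduce $z_i$ exactly, starting from the already-correct value at $k_1$. The intermediate conclusion I am aiming for is that the $LU$ sweep is exact on $\{0,\dots,k_2\}$ and only a lower bound on the upper region.

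The second sweep is the symmetric statement, but the argument must respect the coupling written into Algorithm \ref{alg:BS_LUUL}: the forward recursion $\bar z_i=y_i-\bar l_{i,i-1}z_{i-1}$ uses the \emph{combined} value $z_{i-1}$, and the final value is $z_i=\max\!\left(z^{(1)}_i,\bar z_i\right)$. Writing $\bm y$ for the vector produced by $\bar U\bm y=\bm v$ and setting $\bm y^{\ast\ast}=\bar L\bm z$ gives $\bm y^{\ast\ast}-\bm y=\bar U^{-1}\bm w\ge 0$, which now vanishes for $i>k_2$ since $\bar U^{-1}$ is upper triangular. I would then run a forward induction on the combined value: assuming $z_{i-1}$ is already exact, for $i>k_2$ the exact relation gives $\bar z_i=z_i\ge z^{(1)}_i$, while for $i\le k_2$ we get $\bar z_i\le z_i$ together with $z^{(1)}_i=z_i$; in both cases $\max\!\left(z^{(1)}_i,\bar z_i\right)=z_i$. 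This closes the induction and delivers the exact solution everywhere.

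The step I expect to be the main obstacle is getting the two one-sided comparisons to interlock cleanly rather than drowning in case analysis. The essential asymmetry — that the $LU$ sweep is correct on $\{0,\dots,k_2\}$ and merely a lower bound above $k_2$, then repaired by the $\bar U\bar L$ sweep without spoiling the lower part — hinges on two facts that must be stated precisely: that $\bm w$ is supported exactly on the exercise block, and that the triangularity of $L^{-1}$ and $\bar U^{-1}$ lets contamination propagate in only one direction. I would also check the degenerate boundaries $k_1=0$ and $k_2=m$ (empty continuation regions) and confirm the base cases $z^{(1)}_m=\max(y_m,0)$ and $z_0=\max(z^{(1)}_0,y_0)$ behave correctly, but these are routine once the support and sign facts are in place.
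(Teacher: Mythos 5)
Your proof is correct, and it is considerably more self-contained than the one in the paper. The paper's argument is two sentences: it asserts that the $LU$ sweep produces the correct $z_i$ for $i>k_2$ and the $\bar U\bar L$ sweep the correct $z_i$ for $i<k_1$ (implicitly delegating the correctness of each one-sided sweep to the single-boundary result of Jaillet, Lamberton and Lapeyre, Proposition 5.6), notes that $z_i=0$ in between, and closes by invoking uniqueness of the LCP solution for M-matrices via Cottle. You follow the same high-level strategy --- each sweep is exact on one side of the exercise block and the $\max$ combination repairs the other side --- but you actually prove the one-sided exactness from first principles: the residual $\bm w=M^j\bm z-\bm v\ge 0$ is supported on the exercise block by complementarity, the comparisons $\bm y^{\ast}-\bm y=L^{-1}\bm w\ge 0$ and $\bm y^{\ast\ast}-\bm y=\bar U^{-1}\bm w\ge 0$ localize the contamination to one side by triangularity, and the sign conditions $u_{i,i+1}\le 0$, $\bar l_{i,i-1}\le 0$ make the clamped recursions monotone so that the inductions close. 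This buys you two things the paper does not supply: you never need to cite the external single-boundary correctness result, and you never need the uniqueness theorem, since you compare the algorithm's output directly against the hypothesized solution $\bm z$ rather than showing the output solves the LCP and then appealing to uniqueness. Your explicit treatment of the coupling in the second sweep (the recursion $\bar z_i=y_i-\bar l_{i,i-1}z_{i-1}$ using the already-combined value) is a detail the paper glosses over entirely, and it is exactly the point where a careless argument would break; your forward induction handles it correctly.
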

\begin{proof}
The $LU$ sweep leads to $z_i$, $i > k_2$ and the $\bar{U}\bar{L}$ sweep to  $z_i$, $i < k_1$. In between we know that $z_i = 0$.
Because $M^j$ is an M-matrix, we know from \citet{cottle1976solution}, that the solution is unique. 
\end{proof}
\begin{rem}
In Algorithm \ref{alg:BS_LUUL}, in general, we can not stop the loops at $k'_1$ and $k'_2$ where $k'_1$ is the first index such that $y_i \leq 0$ and $k'_2$ the first index such that  $\bar{y}_i \leq 0$ in the spirit of the algorithm of \citet[p. 115-116]{elliot1985weak}.
\end{rem}
For example, stopping early breaks when the vector $v$ is such that
\begin{align}
	\begin{cases}
		v_i > 0 \textmd{ for } i < k''_1\,, \\
		v_i < 0 \textmd{ for }  k''_1 \leq i < k''_2\,, \\
		v_i > 0 \textmd{ for }   k''_2 \leq i < k''_3\,, \\
		v_i = 0 \textmd{ for } k''_3 \leq i\,,
		\end{cases}
\end{align}
 for some $k''_1, k''_2, k''_3$ such that $0<k''_1<k''_2<k''_3<m$ as in the case of an American put option where $r-\mu < r < 0$.

The double sweep algorithm will also lead to a very good estimate of the solution for a butterfly American option, while an \citet[p. 115-116]{elliot1985weak} like algorithm will not. The latter would require more transitions. 

\begin{prop}
	For an American call option under positive interest rates, the Brennan-Schwartz algorithm is still valid when $c_0 > 0$ or $a_m > 0$, if $v_0\leq0$.
\end{prop}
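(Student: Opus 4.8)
The plan is to verify directly that the vector $\bm{z}$ returned by the $LU$ back-solve of Algorithm \ref{alg:BS_LUUL} satisfies the three conditions of System \ref{lcp_eo}, and then to conclude by uniqueness. Non-negativity $\bm{z}\geq 0$ is immediate from the $\max$ operations, so the real work is the feasibility $M^j\bm{z}\geq\bm{v}$ and the complementarity $(M^j\bm{z}-\bm{v})^\top\bm{z}=0$. For an American call under positive rates a single upper exercise boundary exists, so the exact solution has the single-threshold shape $z_i>0$ for $i<k$ and $z_i=0$ for $i\geq k$, and the $LU$ sweep is exactly the direction that resolves the active constraints at the high indices first. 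The two hypotheses $c_0>0$ and $a_m>0$ are precisely the two ways the boundary inequalities of \ref{eq_boundary_bs} can fail, and each perturbs only one boundary row; the argument therefore splits into showing that these local defects destroy neither the factorization nor the row-by-row constraint check.

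First I would confirm that the factorization still has strictly positive pivots, which is what makes the back-solve well defined. The defect $c_0>0$ turns $u_{01}=c_0/l_{00}$ positive, but since $a_1\leq 0$ the next pivot satisfies $l_{11}=b_1-a_1u_{01}\geq b_1>0$, and the downstream super-diagonal entries keep their signs; symmetrically, $a_m>0$ affects only $l_{mm}=b_m-a_mu_{m-1\,m}\geq b_m>0$, because $u_{m-1\,m}\leq 0$. Hence neither boundary defect can create a non-positive pivot and all divisions in Algorithm \ref{alg:BS_LUUL} are legitimate.

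Next I would use $v_0\leq 0$ to pin down the lower boundary. Since $l_{00}=b_0>0$ we have $y_0=v_0/l_{00}\leq 0$, and the correction $-u_{01}z_1$ is non-positive (it vanishes when $c_0=0$ and is $\leq 0$ when $c_0>0$, as $z_1\geq 0$), so the closing $\max$ returns $z_0=0$, which is the exact value of $\bm{z}$ at the low boundary for a call. Row $0$ feasibility then reads $(M^j\bm{z})_0=c_0z_1\geq 0\geq v_0$, and complementarity in that row is trivial because $z_0=0$. In the exercise block $i\geq k$ the clipped values produce the usual telescoping that reduces feasibility to the exercise-region property $v_i\leq 0$; here $a_m>0$ only enlarges the relevant residual, so it can never flip feasibility negative. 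The clean continuation rows $2\leq i\leq k-1$ are untouched by the boundary defects and follow the same monotonicity argument as the M-matrix proof of \citep{jaillet1990variational}.

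The step I expect to be the main obstacle is the single interface row adjacent to the clipped boundary, namely row $1$: its residual is proportional to $a_1$ times the amount by which $z_0$ was clipped, and a careless bound leaves it with the wrong sign, so that $(M^j\bm{z})_1\geq v_1$ can genuinely fail. The delicate point is therefore to show that $v_0\leq 0$ forces this clip amount to be consistent with the hard constraint $z_0=0$ rather than merely non-negative — equivalently, that the forward elimination has not smuggled a spurious $a_1y_0$ term into $y_1$, which is controlled exactly when $v_0$ equals the natural lower-boundary value. This is the link that ties the sign hypothesis $v_0\leq 0$ to correctness; once it is in place, the remaining rows follow from the interior argument above, and the solution of \ref{lcp_eo} is the unique one by the positive-definiteness of the principal minors guaranteed by the positive pivots \citep{cottle1976solution}.
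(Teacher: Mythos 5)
Your overall route---check that the boundary defects $c_0>0$ or $a_m>0$ cannot destroy the positivity of the pivots, then verify the three conditions of System \ref{lcp_eo} row by row and invoke uniqueness---is the same as the paper's, and your pivot computations ($l_{11}=b_1-a_1u_{01}\geq b_1>0$ because $a_1\leq 0$ and $u_{01}>0$; $l_{mm}\geq b_m>0$ because $u_{m-1m}\leq 0$) reproduce the paper's exactly, as does your use of $v_0\leq 0$ to force $y_0\leq 0$ and hence $z_0=0$. You have also correctly located the crux at the interface row $1$, which the paper passes over more quickly.

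Your treatment of that crux, however, is not a proof, and the gap is real. Writing the row out explicitly: with $z_0$ clipped to $0$ and $z_1=y_1-u_{12}z_2>0$, one gets $(M^j\bm{z})_1=l_{10}u_{01}z_1+l_{11}y_1=v_1+a_1\left(u_{01}z_1-y_0\right)=v_1+\frac{a_1}{b_0}\left(c_0z_1-v_0\right)$. Since $a_1\leq0$, $c_0z_1\geq0$ and $v_0\leq0$, the correction term is $\leq 0$, so feasibility $(M^j\bm{z})_1\geq v_1$ and complementarity at a row with $z_1>0$ fail \emph{strictly} whenever $a_1<0$ and $v_0<0$; no reorganization of the forward elimination removes this term. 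Your closing remark that the step is ``controlled exactly when $v_0$ equals the natural lower-boundary value'' is in effect the admission that exactness needs $v_0=0$ (and $c_0z_1=0$), not merely $v_0\leq0$, and you leave it there. To be fair, the paper does not close this gap rigorously either: its proof concedes that ``only at $i=0$ we may have an inconsistency'' and disposes of it on practical grounds---for a call, $v_0$ is essentially zero at the lower grid bound, so the residual $\frac{a_1}{b_0}(c_0z_1-v_0)$ is negligible---so the proposition's ``still valid'' must be read in that approximate, practical sense rather than as exact solution of \ref{lcp_eo}. Note also that the paper's handling of $a_m>0$ differs slightly from your telescoping-feasibility argument: it observes that $y_m$ may be negative but is effectively never used, because the back-solve has already entered the clipped (exercise) region before reaching index $m$, so the defect at row $m$ is absorbed by the clipping.
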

\begin{proof}
If $c_0 > 0$, we  have $l_{ii}>0$ as long as $l_{11} = b_1 - a_1 u_{01} = b_1 - a_1 c_0/b_0 > 0$, which is true since $a_1 < 0$ and $b_i > 0$. Thus $y_0>0$ and the sign of $y_i$ is unchanged compared to the case $c_0 < 0$, for $i\geq 1$. Similarly, the sign of $z_i$ is unchanged for $i\geq 1$. Only at $i=0$ we may have an inconsistency, but for a call, it is never optimal to exercise using practical grid bounds as the early exercise payoff is essentially 0.

If $a_m > 0$, we have $l_{mm} = b_m - a_m u_{m-1m} = b_m - a_m* c_{m-1}/l_{m-1m-1}$. We know that $c_{m-1} < 0$ and thus $l_{mm} > 0$, $u_{m-1m} < 0$.  The value of $y_m$ may still be strictly negative. Let  $k_1$  be the first index such that $y_i \leq 0$. In practice, $k_1 < m$, unless it is never optimal to early-exercise. Since $y_i$ becomes negative for $k_1 \leq i < m$, the back-solving loops may be stopped at index $k < m$ and $z_i = 0$. The value $y_m$ is effectively not used.
\end{proof}
The same reasoning is obviously applicable to the Brennan-Schwartz algorithm for the American put, as well as to Algorithm \ref{alg:BS_LUUL}.



In order to improve the performance of the algorithm on trivial cases, we may check for the number of sign changes and the sign of the first element of $\bm{v}$. If there is zero or one sign change, we may run only the $UL$ back-solve when the sign is positive, and  the $LU$ back-solve when the sign is negative.

\section{Numerical examples}
\subsection{Negative interest rates}
We consider the example from \cite{andersen2021fast} of an American put option of strike $K=100$ and various maturities with an underlying asset spot price of $S=100$, a constant interest rate of $r=-1.2\%$, a drift $\mu = 0.4\%$ and volatility $\sigma = 10\%$.

We price each option with the TR-BDF2 scheme applied on a grid composed of $n=100$ steps in the time dimension and $m=2000$ steps in the asset dimension. In the asset dimension, we use a non-uniform hyperbolic grid, with more points close to the strike price and less points at the boundaries. In the time dimension, we consider two different discretizations, one with constant steps, and one with the time step size following a uniform square root law: $t_j = T - (n-j)^2/n^2 T$. In the former time-discretization, the system matrix may be computed once, and the LU factorization may be reused across time-steps latter, while in the latter the system matrix must be updated at each time-step. The latter is representative of the more general case of non-constant rates, drift or volatility. The TR-BDF2 scheme however involves the same matrix decomposition in each of its two internal stages, and the LU factorization is still beneficial in practice.

As expected, the double sweep algorithm takes twice the time of the original Brennan-Schwartz algorithm (Table \ref{tbl:andersen}). 
The tridiagonal policy iteration solver of \citet{reisinger2012use} is around 50\% slower on this example.
\begin{table}[h]
	\caption{Error in the price of an American put of five distinct maturities, when computed with the TR-BDF2 scheme and various solvers for the LCP. PI, LUUL, BS stand respectively for the tridiagonal policy iteration solver, the double sweep LU decomposition solver and the classic Brennan-Schwartz solver.\label{tbl:andersen}}
	\centering{
		\begin{tabular}{lllrrr}\toprule
			$T$ & Reference Price & Time-steps& PI Error (Time) & LUUL Error (Time) & BS Error (Time) \\\midrule
			45/365 & 1.380533089 & Varying &-1.0e-5 (27 ms)& -1.0e-5 (20 ms)& -2.0e-3 (10 ms)\\
& & Constant & -2.9e-5 (29 ms)& -2.9e-5 (14 ms) & -2.0e-3 (9 ms) \\ 
	90/365 &	1.942381237 & Varying & -3.1e-5 (29 ms) & -3.1e-5 (13 ms) & -7.1e-3 (9 ms)\\
	& & Constant &1.0e-7 (30 ms) & 1.0e-7 (19 ms)& -3.8e-3 (10 ms)\\
	180/365 & 2.729267252 & Varying & -8.2e-6 (35 ms)& -8.2e-6 (19 ms) & -7.1e-3 (11 ms)\\
	&& Constant & -5.9e-5 (36 ms)& -5.9e-5 (13 ms) & -7.1e-3 (9 ms)\\
	360/365 & 3.830520425 & Varying & 1.4e-6 (40 ms)& 1.4e-6 (18 ms) & -1.2e-2 (11 ms)\\
	& & Constant & 8.1e-5 (36 ms)& 8.1e-5 (14 ms) & -1.2e-2 (9 ms)\\
	3600/365 &12.189323541 & Varying &-3.6e-6 (29 ms)& -3.6e-6 (17 ms) & -1.4e-2 (10 ms) \\
	&& Constant &-4.5e-4 (25 ms)& -4.5e-4 (10 ms) & -1.4e-2 (9 ms) \\\bottomrule
	\end{tabular}}
\end{table}
We found the policy iteration solver to be even slower (around twice) for call options under negative rates.

Table \ref{tbl:accuracy} verifies, using a small grid, that the double sweep LU decomposition leads to exactly the same solution as the policy iteration solver.

\begin{table}[h]
	\caption{Price of an American option on a small grid of 20 time-steps and 20 space steps, when computed with the TR-BDF2 scheme and various solvers for the LCP. PI, LUUL, BS stand respectively for the tridiagonal policy iteration solver, the double sweep LU decomposition solver and the classic Brennan-Schwartz solver. $S = 90, K = 100, \sigma = 8\%, r = 1\%, \mu=0.5\%$.\label{tbl:accuracy}}
	\centering{
		\begin{tabular}{lrrr}\toprule
			 & PI  & LUUL  & Difference \\\midrule
			Call & 0.2924244529450148 & 0.29242445294501457 & 2.2 e-16\\ 
			Put &10.635776477887287 & 10.635776477887285 & 1.8e-15		\\
\bottomrule
	\end{tabular}}
\end{table}

\subsection{American Butterfly}
We consider now an American butterfly option of strikes $K_1 = 90$ and $K_2 = 110$, maturity $T=0.25$ using the following market data: $r=\mu=1\%, \sigma=100\%, S = 110$. We use a fixed uniform discretization in the asset price dimension composed of 301 points from $x_0=0$ to $x_m = 300$ and vary the number of time-step using uniform steps. In particular, $x_m$ is less than three standard deviations away from the spot price, which allows to put in evidence the error of the double sweep algorithm. The number of points does not change the scale of the error.

Table \ref{tbl:butterfly} shows that the double sweep algorithm is still very accurate in practice. It does not perturb the order of convergence in contrast to the classic Brennan-Schwartz algorithm which results in a significantly larger error in price.
\begin{table}[h]
	\caption{The column "Difference" is the price obtained by the Successive Over Relaxation method subtracted to the price obtained by the specific solver.\label{tbl:butterfly}}
	\centering{
		\begin{tabular}{llrrr}\toprule

		$n$	& Solver & Price & Difference & Time\\\midrule
	4 & BS & 6.163251 & -2.74e+00 & 110µs\\
	& LUUL &  8.900522 & -1.52e-06 & 126 µs\\
	 & PI &  8.900523 & 4.44e-14 &210 µs\\
	8 & BS & 7.030902  & -1.83e+00 & 235 µs\\
	& LUUL & 8.865021 & -2.81e-07 & 228 µs\\
	 &PI & 8.865021 &  -6.04e-14 & 296 µs\\
	 16 &BS & 7.596790 & -1.27e+00 & 306µs\\
&LUUL & 8.863211 & -1.51e-08 & 465 µs \\
	    & PI & 8.863211 & 1.03e-13 & 486 µs\\
	 32  &BS & 7.972106 & -8.91e-01 & 795 µs \\
	 & LUUL & 8.862836 &-1.56e-10 & 964 µs\\
	    & PI & 8.862836 & -1.78e-14 & 806 µs\\
	 64 & BS & 8.248415 & -6.14e-01 & 924 µs\\
	 &LUUL & 8.862750 & -1.79e-13 & 1139 µs\\
	    &PI & 8.862750  & 3.55e-14 & 1387 µs\\
	\bottomrule
	\end{tabular}}
\end{table}
The policy iteration algorithm is however nearly as fast as the double sweep on this example, especially when the number of time-steps is larger than 16. 

\section{Conclusion}
We have shown that using two sweeps of the traditional Brennan-Schwartz algorithm  constitute a simple and exact algorithm to solve the linear complementary problem arising in the pricing of American options under negative interest rates when the system involves a tridiagonal matrix (the most common case in practice). It is particularly relevant to price vanilla American options under non-constant interest rate, or underlying asset drift, as well as for stocks paying discrete dividends.

It is faster in general than the policy iteration algorithm optimized for tridiagonal systems, while being straightforward to implement. It is however not exact anymore for non-monotonic payoffs, such as American butterfly options, but we found it to lead to very accurate results in practice nonetheless.

\externalbibliography{yes}
\bibliography{double_sweep_american.bib}

\begin{thebibliography}{}

\bibitem[\protect\citeauthoryear{Andersen and Lake}{Andersen and
  Lake}{2021}]{andersen2021fast}
Andersen, Leif and Mark Lake. 2021.
\newblock Fast american option pricing: The double-boundary case.
\newblock {\em Wilmott\/}~{\em 2021\/}(116), 30--41.

\bibitem[\protect\citeauthoryear{Black and Scholes}{Black and
  Scholes}{1973}]{black1973pricing}
Black, Fischer and Myron Scholes. 1973.
\newblock The pricing of options and corporate liabilities.
\newblock {\em Journal of political economy\/}~{\em 81\/}(3), 637--654.

\bibitem[\protect\citeauthoryear{Brennan and Schwartz}{Brennan and
  Schwartz}{1977}]{brennan1977valuation}
Brennan, M.J. and E.S. Schwartz. 1977.
\newblock {The valuation of American put options}.
\newblock {\em Journal of Finance\/}~{\em 32\/}(2), 449--462.

\bibitem[\protect\citeauthoryear{Cottle and Sacher}{Cottle and
  Sacher}{1976}]{cottle1976solution}
Cottle, Richard~W and Richard~S Sacher. 1976.
\newblock On the solution of large, structured linear complementarity problems:
  The tridiagonal case.
\newblock {\em Applied Mathematics and Optimization\/}~{\em 3\/}(4), 321--340.

\bibitem[\protect\citeauthoryear{Dupire}{Dupire}{1994}]{dupire1994pricing}
Dupire, Bruno. 1994.
\newblock Pricing with a smile.
\newblock {\em Risk\/}~{\em 7\/}(1), 18--20.

\bibitem[\protect\citeauthoryear{Elliot and Ockendon}{Elliot and
  Ockendon}{1985}]{elliot1985weak}
Elliot, C and JR~Ockendon. 1985.
\newblock Weak and variational methods for free boundary problems.
\newblock {\em Pitman, London\/}.

\bibitem[\protect\citeauthoryear{Healy}{Healy}{2021}]{healy2021}
Healy, Jherek. 2021.
\newblock {\em Applied Quantitative Finance for Equity Derivatives\/} (3 ed.).
\newblock Amazon.

\bibitem[\protect\citeauthoryear{Heston}{Heston}{1993}]{heston93}
Heston, S.L.. 1993.
\newblock {A closed-form solution for options with stochastic volatility with
  applications to bond and currency options}.
\newblock {\em Review of financial studies\/}, 327--343.

\bibitem[\protect\citeauthoryear{Ikonen and Toivanen}{Ikonen and
  Toivanen}{2004}]{ikonen2004operator}
Ikonen, S. and J.~Toivanen. 2004.
\newblock {Operator splitting methods for American option pricing}.
\newblock {\em Applied Mathematics Letters\/}~{\em 17\/}(7), 809--814.

\bibitem[\protect\citeauthoryear{Ikonen and Toivanen}{Ikonen and
  Toivanen}{2007}]{ikonen2007pricing}
Ikonen, Samuli and Jari Toivanen. 2007.
\newblock Pricing american options using lu decomposition.
\newblock {\em Applied Mathematical Sciences\/}~{\em 1\/}(51), 2529--2551.

\bibitem[\protect\citeauthoryear{Il'in}{Il'in}{1969}]{il1969differencing}
Il'in, Arlen~Mikhailovich. 1969.
\newblock Differencing scheme for a differential equation with a small
  parameter affecting the highest derivative.
\newblock {\em Mathematical Notes of the Academy of Sciences of the
  USSR\/}~{\em 6\/}(2), 596--602.

\bibitem[\protect\citeauthoryear{Jaillet, Lamberton, and Lapeyre}{Jaillet
  et~al.}{1990}]{jaillet1990variational}
Jaillet, P., D.~Lamberton, and B.~Lapeyre. 1990.
\newblock {Variational inequalities and the pricing of American options}.
\newblock {\em Acta Applicandae Mathematicae\/}~{\em 21\/}(3), 263--289.

\bibitem[\protect\citeauthoryear{Lamberton and Lapeyre}{Lamberton and
  Lapeyre}{1996}]{LaLa96}
Lamberton, D. and B.~Lapeyre. 1996.
\newblock {\em Introduction to Stochastic Calculus Applied to Finance}.
\newblock Chapman and Hall.

\bibitem[\protect\citeauthoryear{{Le Floc'h}}{{Le
  Floc'h}}{2014}]{lefloch2014tr}
{Le Floc'h}, Fabien. 2014.
\newblock Tr-bdf2 for fast stable american option pricing.
\newblock {\em Journal of Computational Finance\/}~{\em 17\/}(3), 31--56.

\bibitem[\protect\citeauthoryear{Le~Floc'h}{Le~Floc'h}{2021}]{le2021pricing}
Le~Floc'h, Fabien. 2021.
\newblock Pricing american options with the runge-kutta-legendre finite
  difference scheme.
\newblock {\em International Journal of Theoretical and Applied Finance\/}~{\em
  24\/}(3), 2150018.

\bibitem[\protect\citeauthoryear{Nielsen, Skavhaug, and Tveito}{Nielsen
  et~al.}{2002}]{nielsen2002penalty}
Nielsen, B.F., O.~Skavhaug, and A.~Tveito. 2002.
\newblock {Penalty and front-fixing methods for the numerical solution of
  American option problems}.
\newblock {\em Journal of Computational Finance\/}~{\em 5\/}(4), 69--98.

\bibitem[\protect\citeauthoryear{O'Sullivan and O'Sullivan}{O'Sullivan and
  O'Sullivan}{2009}]{OSullivan09}
O'Sullivan, Stephen and Conall O'Sullivan. 2009.
\newblock On the acceleration of explicit finite difference methods for option
  pricing.
\newblock {\em Quantitative Finance\/}~{\em 1469-7696}.

\bibitem[\protect\citeauthoryear{Pantazopoulos, Houstis, and
  Kortesis}{Pantazopoulos et~al.}{1998}]{pantazopoulos1998front}
Pantazopoulos, KN, EN~Houstis, and S.~Kortesis. 1998.
\newblock Front-tracking finite difference methods for the valuation of
  american options.
\newblock {\em Computational Economics\/}~{\em 12\/}(3), 255--273.

\bibitem[\protect\citeauthoryear{Reisinger and Witte}{Reisinger and
  Witte}{2012}]{reisinger2012use}
Reisinger, Christoph and Jan~Hendrik Witte. 2012.
\newblock On the use of policy iteration as an easy way of pricing american
  options.
\newblock {\em SIAM Journal on Financial Mathematics\/}~{\em 3\/}(1), 459--478.

\bibitem[\protect\citeauthoryear{Wilmott, Dewynne, and Howison}{Wilmott
  et~al.}{1993}]{WiDeHo93}
Wilmott, P., J.~Dewynne, and S.~Howison. 1993.
\newblock {\em {Option Pricing: Mathematical Models and Computation}}.
\newblock Oxford Financial Press.

\end{thebibliography}
\appendixtitles{no}
\appendix

\section{Inside the butterfly example}
The inaccuracy of the double sweep algorithm on the American butterfly example is already visible if we reduce the number of steps in the asset price dimension to 15, and use 3 time-steps.

The tridiagonal matrix is as follows:
\begin{align*}
\bm{a} &= (0, -0.012081845276054914, -0.0485714587865642, -0.10946884053152789,\\& -0.19477399051094593, -0.3044869087248183, -0.4386075951731451, -0.5971360498559263,\\& -0.7800722727731618, -0.9874162639248517, -1.219168023310996, -1.4753275509315948,\\& -1.7558948467866478, -2.0608699108761552, -2.3902527432001173, 0.0036611652351682144)\,,\\
\bm{b} &= (1.0002440776823445, 1.024651845916799, 1.097875150620162, 1.219913991792434,\\& 1.3907683694336146, 1.6104382835437039, 1.878923734122702, 2.196224721170609,\\& 2.562341244687425, 2.977273304673149, 3.441020901127782, 3.953584034051324,\\& 4.514962703443775, 5.125156909305134, 5.784166651635402, 0.9965829124471763)\,,\\
\bm{c} &= (0, -0.012325922958399462, -0.0490596141512533, -0.1102010735785615,\\& -0.19575030124032408, -0.30570729713654105, -0.44007206126721243, -0.5988445936323381, \\&-0.7820248942319182, -0.9896129630659527, -1.2216088001344414,\\& -1.4780124054373847, -1.7588237789747825, -2.064042920746634, -2.3936698307529403, 0)\,,		
\end{align*}
with initial vector
\begin{align*}
	\bm{g} &=  (0, 0, 0, 0,\\& 1.9575030124032409, 3.895617164562961, 4.386075951731451, 0,\\& 0, 0, 0, 0,\\& 0, 0 ,0, 0)\,,
\end{align*}
and lower boundary
\begin{align*}
F &= (0, 0, 0, 0, 0, 10, 0, 0, 0, 0, 0, 0, 0, 0, 0, 0)\,.
\end{align*}
The nearly exact solution found by the policy iteration algorithm reads
\begin{align*}
	\bm{f}^\star &= (0, 0.00013908409255599, 0.011562036583884633, 0.2586020570733455,\\& 2.8512116514642054, 10, 5.021713994073349, 1.507175220503007,\\&  0.5200832132225875,0.20066505470872006, 0.0847766655914132, 0.038534316489836615,\\&  0.018454045388137823, 0.008902039586522732, 0.0036786845579122227, 0)\,.
\end{align*}

The error of the double sweep algorithm reads
\begin{align*}
	\bm{f}^\star_{LUUL}-\bm{f}^\star	&= (0,0,	0,	0,	0,	0,	2.53\cdot10^{-9},	1.08\cdot10^{-8},\\&	3.71\cdot10^{-8},	1.11\cdot10^{-7},	2.96\cdot10^{-7},	7.24\cdot10^{-7},	1.64\cdot10^{-6},	3.49\cdot10^{-6},	7.02\cdot10^{-6},0)	\,.
\end{align*}
\section{Combined double-sweep Brennan-Schwartz}
Here we give the more compact algorithm corresponding the double sweep Brennan-Schwartz technique, where the LU decomposition and back-solve are merged. It is not faster
when using the TR-BDF2 scheme as the LU decomposition is reused among the two stages of the scheme, but may be faster for the implicit Euler (BDF1 or BDF2) schemes when the problem to solve includes time-dependent coefficients.

\begin{algorithm}
	$y_0 = b_{0}$     \tcp*[f]{start  fast $LU$ back-solve.}\\
	$z_0 = v_0$\\
	\For{$i \leftarrow 1$  \KwTo $m$}{
		$y_i = b_i - a_{i}c_{i-1}/{y}_{i-1}$\\
		${z}_i = {v}_i - a_i {z}_{i-1}/y_{i-1}$
	}
	
	${z}_m = {z}_m/y_m$\\
	$z_m = \max(z_m, 0)$\\
	\For{$i \leftarrow  m-1$ \KwTo $0$}{
		${z}_i = ({z}_i - c_{i}z_{i+1})/y_i$\\
		$z_i = \max(z_i, 0)$
	}
	
	$y_m = b_{m}$     \tcp*[f]{start  fast $\bar{U}\bar{L}$ back-solve.}\\
	$\bar{z}_m = v_m$\\
	\For{$i \leftarrow m-1$  \KwTo 0}{
		$y_i = b_i - c_{i}a_{i+1}/{y}_{i+1}$\\
		$\bar{z}_i = v_i - c_i \bar{z}_{i+1}/y_{i+1}$
	}
	
	$\bar{z}_0 = \bar{z}_0/y_0$\\
	$z_0 = \max(z_0, \bar{z}_0)$\\
	\For{$i \leftarrow  1$ \KwTo $m$}{
		$\bar{z}_i = (\bar{z}_i - a_{i}z_{i-1})/y_i$\\
		$z_i = \max(z_i, \bar{z}_i)$
	}
	\caption{Fast double sweep Brennan and Schwartz algorithm for the transformed problem\label{alg:BS_LUUL_F}}
\end{algorithm}
\end{document}